\newenvironment{Bullet}
{\begin{list}{}%
             {\setlength\labelsep{0pt}%
              \setlength\itemindent{0pt}%
              \setlength\leftmargin{15pt}%
              \setlength\labelwidth{15pt}%
              \setlength\topsep{3pt}%
              \setlength\parsep{3pt}%
              \setlength\itemsep{0pt}%
              }
\item[$\bullet$\hfill]}%
{\end{list}}
\begin{document}

\spnewtheorem{algorithm}{Algorithm}{\bf}{}
\spnewtheorem{example_newdef}{Example}{\bf}{}

\title{On Decoding of DVR-Based Linear Network Codes 
}

\author{Qifu (Tyler) Sun \and
Shuo-Yen Robert Li}  

\institute{Q.~T.~Sun (\Letter) \at
        Institute of Advanced Networking Technology and New Service, University of Science and Technology Beijing, Beijing, P. R. China \\
              \email{qfsun@ustb.edu.cn}        
              \and
              S.-Y.~R.~Li \at
              Key Laboratory of Network Coding Key Technology and Application, Shenzhen, Shenzhen Research Institute, The Chinese University of Hong Kong, P. R. China \\
              \email{bobli@ie.cuhk.edu.hk} \\
}


\maketitle

\begin{abstract}
The conventional theory of linear network coding (LNC) is only over acyclic networks. Convolutional network coding (CNC) applies to all networks. It is also a form of LNC, but the linearity is w.r.t. the ring of rational power series rather than the field of data symbols. CNC has been generalized to LNC w.r.t. any discrete valuation ring (DVR) in order for flexibility in applications. For a causal DVR-based code, all possible source-generated messages form a free module, while incoming coding vectors to a receiver span the \emph{received submodule}. An existing \emph{time-invariant decoding} algorithm is at a delay equal to the largest valuation among all invariant factors of the received submodule. This intrinsic algebraic attribute is herein proved to be the optimal decoding delay. Meanwhile, \emph{time-variant decoding} is formulated. The meaning of time-invariant decoding delay gets a new interpretation through being a special case of the time-variant counterpart. The optimal delay turns out to be the same for time-variant decoding, but the decoding algorithm is more flexible in terms of decodability check and decoding matrix design. All results apply, in particular, to CNC.
\keywords{linear network coding \and discrete valuation ring \and cyclic networks \and decoding delay \and time-variant decoding}
\end{abstract}

\section{Introduction}\label{Sec:Introduction}
The fundamental theory of linear network coding (LNC) studies data propagation through an acyclic multicast network (\cite{LiYeungCai_03,Koetter_Medard_03}). The acyclic topology keeps data flowing from the upstream to the downstream. As the source pipelines messages into the network, the theory deals with each individual message separately by assuming appropriate buffering and synchronization. Meanwhile, the issue of data propagation delay can be set aside by assuming a delay-free network.

Without the acyclic assumption, on the other hand, data in sequential messages may convolve together through cyclic transmission. One way to deal with cyclic data propagation is by convolutional network coding (CNC) (\cite{Koetter_Medard_03,LiYeung_CNC_06}). To ensure causality, the data propagation delay must be nonzero around every cycle in the network. As proved in \cite{LiYeung_CNC_06}, causal data propagation can achieve simultaneous optimal data rate from the source to every other node, that is, at the rate equal to the maxflow from the source to each node. It is not sensible to assume a delay-free network with cycles. The data propagation delay is an essential factor in CNC. That includes \emph{decoding delay} as well as other forms of delay.

The symbol alphabet in LNC is algebraically structured as a finite field $\mathbb{F}$. Represent a message by an $\omega$-dim row vector over $\mathbb{F}$. A linear network code assigns a \emph{coding coefficient} from the symbol field $\mathbb{F}$ to every \emph{adjacent pair} of channels. Calculating from the upstream to the downstream, the coding coefficients naturally derive a \emph{coding vector} on every channel, which is an $\omega$-dim column vector over $\mathbb{F}$, such that the symbol transmitted on a channel is the dot product between the message and the coding vector. In CNC, on the other hand, the data unit is a pipeline of symbols, so is a coding coefficient. As explained in \cite{LiYeung_CNC_06}, these pipelines should be regarded as \emph{rational power series} over the symbol field $\mathbb{F}$ rather than polynomials or power series over $\mathbb{F}$. Thus, as a form of LNC, CNC is linear with respect to the ring $\mathbb{F}[(D)]$ of rational power series rather than to the field $\mathbb{F}$. Here the dummy variable $D$ stands for a unit-time delay.

While theory of LNC over the symbol field $\mathbb{F}$ applies only to acyclic networks, theory of CNC over the ring $\mathbb{F}[(D)]$ works on all networks. The underpinning reason is that the algebraic structure of the ring $\mathbb{F}[(D)]$ includes an acyclic attribute, namely, time that breaks the deadlock in cyclic transmission. This characteristic of $\mathbb{F}[(D)]$ is shared by every discrete valuation ring (DVR), which means a \emph{local} principal ideal domain (PID). For example, $p$-adic integers also form a DVR.

Every element in a DVR is equal to a power of the uniformizer subject to a unit factor, and the exponent is the (non-Archimedean) valuation of the element. All ideals in a DVR form an infinite chain under inclusion that mimics the unidirectional characteristic of time. The uniformizer in a DVR generalizes the role that the unit-time delay $D$ plays in $\mathbb{F}[(D)]$. The LNC theory in \cite{LiSun_11} is linear with respect to a general DVR over a cyclic network. We shall call this DVR-based LNC. It generalizes the notion of CNC and may potentially compensate for deficiencies of CNC, such as the difficulty in long-distance synchronization.

In DVR-based LNC, the total data generated by the source is represented by an $\omega$-dim vector over the DVR. The ensemble of all possible data units makes the \emph{source module}, which is an $\omega$-dim free module over the DVR. The incoming coding vectors to a node span a submodule to be called the \emph{received submodule} at the node. As a DVR is a PID, the received submodule is also a free module by the \emph{theorem of invariant factor decomposition of a free module over a PID} (See \cite{Dummit_Foote} for example). When the received submodule is of the full rank $\omega$, it differs from the source module by the invariant factors. In that case, the node is able to \emph{decode} the source data in the sense quoted by Definition 1 below. The \emph{decoding delay} is formulated in the form of an exponent of the uniformizer. In the case of CNC, that is, when the DVR is $\mathbb{F}[(D)]$, this is an amount of unit times. The decoding mechanism provided in \cite{LiSun_11} incurs a delay equal to the largest valuation among all invariant factors of the received submodule.

Decoding in the sense of Definition 1, however, involves a decoding matrix that depends on the identity of the whole received submodule. This leaves some issues to be further explicated, including:
\begin{Bullet}
What does the decoding delay at a receiving node mean?
\end{Bullet}
\begin{Bullet}
How to implement a decoding mechanism at a prescribed decoding delay?
\end{Bullet}
\begin{Bullet}
What is the optimal decoding delay and how to determine it?
\end{Bullet}

The present paper attempts to strengthen the study on the decoding of DVR-based LNC, and to answer the three questions raised in the above particularly. After a brief review of the said decoding mechanism, Theorem 3 in Section 2 asserts that the optimal decoding delay under this decoding mechanism is exactly the largest valuation among all invariant factors of the received submodule. Section 3 expresses elements in the DVR in the series form with the dummy variable being the uniformizer and the coefficients being coset representatives of the DVR over its unique maximal ideal. The arithmetic over such expressions is also described. Based on this power series expression, Section 4 formulates the notion of \emph{time-variant decoding}. Here the word ``time'' refers to the exponent of the uniformizer, which generalizes the role of time in the combined space-time domain represented by $\mathbb{F}[(D)]$. Decoding in the sense of Definition 1 will hereafter be referred to as \emph{time-invariant decoding}. By showing time-invariant decoding as a special case of time-variant decoding, the meaning of time-invariant decoding delay gets clarified. Theorem 7 proves that the optimal delay is the same for both decoding mechanisms, but the implementation of time-variant decoding is less constricted than time-invariant decoding in the sense that the check of decodability with a prescribed delay and all subsequent decoding matrices involved depend only on finitely many terms in the power series expression of coding vectors. Such a new decoding scheme for DVR-based LNC is proposed in Algorithm 2.

\section{Optimal Decoding Delay of DVR-Based LNC}\label{Sec:Optimal_Decoding_Delay}
Let every edge in a multicast network represent a channel of unit capacity for noiseless transmission. Multiple edges between nodes are allowed. There are $\omega$ outgoing channels from the source, which are called source channels.

Denote by $\mathbb{D}$ a DVR with the uniformizer $z$. A $\mathbb{D}$-linear network code ($\mathbb{D}$-LNC) assigns a coding coefficient $k_{d,e} \in \mathbb{D}$ to every adjacent pair $(d, e)$ of channels. The source generates $\omega$ data units belonging to $\mathbb{D}$, one to be sent out from each source channel. The $\omega$ data units are represented by an $\omega$-dim row vector $\mathbf{m}$. A $\mathbb{D}$-LNC is said to be \emph{causal} when, around every cycle in the network, there is at least one adjacent pair $(d, e)$ with $k_{d,e}$ divisible by $z$. Causality guarantees the existence of a unique set of coding vectors, which is to assign a vector $\mathbf{f}_e$ over $\mathbb{D}$ to each channel $e$ so that the channel carries the data unit calculable by $\mathbf{m}\cdot \mathbf{f}_e$. In particular, the coding vectors for the $\omega$ source channels form the natural basis of the free module $\mathbb{D}^\omega$, and the coding vector $\mathbf{f}_e$ for every outgoing channel $e$ of a non-source node $v$ is equal to $\sum_{d\in In(v)}k_{d,e}\mathbf{f}_d$.

Note that a DVR-based LNC in general does not necessarily correspond to a set of coding vectors and, when it does, the set may not be unique (See \cite{LiSun_11}). Hereafter all $\mathbb{D}$-LNCs considered in this paper are assumed to be causal.

The data unit received by a node $v$ from an incoming channel $e$ via a $\mathbb{D}$-LNC is $\mathbf{m}\cdot\mathbf{f}_e$. Denote by $In(v)$ the set of incoming channels to $v$. Juxtapose the incoming coding vectors to $v$ into the matrix $[\mathbf{f}_e]_{e\in In(v)}$. Thus, $\mathbf{m}\cdot[\mathbf{f}_e]_{e\in In(v)}$ is the row vector of all $|In(v)|$ data units received by $v$. Let $\mathbf{I}$ denote the $\omega \times \omega$ identity matrix. As a prerequisite for decoding at the node $v$, the matrix $[\mathbf{f}_e]_{e\in In(v)}$ over $\mathbb{D}$ must attain the full rank $\omega$. Assume that this is the case.

\begin{definition}(\cite{LiSun_11})
For a causal $\mathbb{D}$-LNC, a decoding matrix at a node $v$ with \emph{decoding delay} $\delta \geq 0$ is an $|In(v)|\times \omega$ matrix $\mathbf{A}$ over $\mathbb{D}$ such that
\begin{equation}\label{eqn:1}
[\mathbf{f}_e]_{e\in In(v)} \cdot \mathbf{A} = z^\delta \mathbf{I}
\end{equation}
Note that, when the matrix $\mathbf{A}$ exists, it is uniquely determined by $[\mathbf{f}_e]_{e\in In(v)}$.
\end{definition}

The meaning of decoding delay in this definition will be clarified when the decoding mechanism so formulated is shown as a special case of ``time-variant decoding'' in Section \ref{Sec:Time_vriant_decoding}.

The incoming coding vectors to the node $v$ generates a submodule $\langle\mathbf{f}_e: e\in In(v)\rangle$ of the free module $\mathbb{D}^\omega$ over $\mathbb{D}$. According to the \emph{theorem of invariant factor decomposition of a free module over a PID}, the module $\mathbb{D}^\omega$ is free of the rank $\omega$.
Moreover, there is a basis $\left\{\mathbf{u}_1, \cdots, \mathbf{u}_\omega\right\}$ of $\mathbb{D}^\omega$ and nonnegative integers $i_1 \leq \cdots \leq i_\omega$ such that $z^{i_1}\mathbf{u}_1, \cdots, z^{i_\omega}\mathbf{u}_\omega$ form a basis of $\langle\mathbf{f}_e: e \in In(v)\rangle$. Here the invariant factors refer to $z^{i_1}, \cdots, z^{i_\omega}$, of which the evaluations are $i_1, \cdots, i_\omega$, respectively.

\begin{lemma}\emph{(\cite{LiSun_11})}
Let the matrix $[\mathbf{f}_e]_{e\in In(v)}$ over $\mathbb{D}$ attain the full rank $\omega$ at a node $v$ for a causal $\mathbb{D}$-LNC. Then, a time-invariant decoding matrix exists with a decoding delay no more than the largest valuation among invariant factors of the submodule $\langle \mathbf{f}_e: e \in In(v)\rangle$ in $\mathbb{D}^\omega$.
\end{lemma}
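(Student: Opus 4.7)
The plan is to produce the decoding matrix $\mathbf{A}$ directly from the invariant factor decomposition recalled just before the lemma. Write $F := [\mathbf{f}_e]_{e\in In(v)}$ for the $\omega \times |In(v)|$ matrix over $\mathbb{D}$ whose column span is the received submodule $M := \langle \mathbf{f}_e : e \in In(v)\rangle$, and set $\delta := i_\omega$. The goal then reduces to exhibiting an $|In(v)| \times \omega$ matrix $\mathbf{A}$ over $\mathbb{D}$ with $F\mathbf{A} = z^\delta \mathbf{I}$.

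The central claim is that $z^{i_\omega}\mathbb{D}^\omega \subseteq M$. To see this, take any $\mathbf{x} \in \mathbb{D}^\omega$ and expand it in the basis $\{\mathbf{u}_1, \ldots, \mathbf{u}_\omega\}$ of $\mathbb{D}^\omega$ furnished by the invariant factor theorem, say $\mathbf{x} = \sum_j c_j \mathbf{u}_j$ with $c_j \in \mathbb{D}$. Then
\[
z^{i_\omega}\mathbf{x} \;=\; \sum_{j=1}^{\omega} \bigl(c_j z^{i_\omega - i_j}\bigr)\bigl(z^{i_j}\mathbf{u}_j\bigr),
\]
and the ordering $i_1 \leq \cdots \leq i_\omega$ makes each exponent $i_\omega - i_j$ nonnegative, so every scalar $c_j z^{i_\omega - i_j}$ is a genuine element of $\mathbb{D}$. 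As $\{z^{i_j}\mathbf{u}_j\}_j$ is a basis of $M$, the vector $z^{i_\omega}\mathbf{x}$ lies in $M$.

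Applying this to each standard basis vector $\mathbf{e}_i$ of $\mathbb{D}^\omega$, every column $z^\delta \mathbf{e}_i$ of $z^\delta \mathbf{I}$ sits in the column span of $F$. Hence for each $i$ there exists $\mathbf{a}_i \in \mathbb{D}^{|In(v)|}$ with $F\mathbf{a}_i = z^\delta \mathbf{e}_i$, and juxtaposing these columns yields the desired matrix $\mathbf{A} = [\mathbf{a}_1\ \cdots\ \mathbf{a}_\omega]$ satisfying $F\mathbf{A} = z^\delta \mathbf{I}$.

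Because the invariant factor theorem does the heavy lifting, there is no substantive obstacle; the only point worth flagging is the nonnegativity $i_\omega - i_j \geq 0$, which is immediate from the chosen ordering. No minimality argument is required here, since the lemma asserts only an upper bound on the achievable delay; optimality of this bound is the separate statement promised as Theorem 3 of the paper.
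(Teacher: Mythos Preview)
Your argument is correct. The containment $z^{i_\omega}\mathbb{D}^\omega \subseteq M$ follows exactly as you say from the ordering $i_1 \leq \cdots \leq i_\omega$, and from there the existence of $\mathbf{A}$ is immediate.

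Note, however, that the paper does not actually prove this lemma: it is quoted from \cite{LiSun_11} and stated without proof. The closest the present paper comes to an argument is the explicit Smith normal form construction that appears later in Algorithm~2 and in the proof of Corollary~8: one writes $F = \mathbf{U}\,[\,\mathrm{diag}(z^{i_1},\ldots,z^{i_\omega})\ \mathbf{0}\,]\,\mathbf{V}$ with $\mathbf{U},\mathbf{V}$ invertible over $\mathbb{D}$, and then sets
\[
\mathbf{A} \;=\; \mathbf{V}^{-1}
\left[\begin{matrix}
z^{\delta-i_1} & & \mathbf{0} \\ & \ddots & \\ \mathbf{0} & & z^{\delta-i_\omega} \\ \mathbf{0} & \cdots & \mathbf{0}
\end{matrix}\right]
\mathbf{U}^{-1},
\]
which visibly satisfies $F\mathbf{A} = z^\delta\mathbf{I}$ once $\delta \geq i_\omega$. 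Your proof is the module-theoretic rephrasing of this same idea: instead of writing down $\mathbf{A}$ via the change-of-basis matrices, you observe that $z^{i_\omega}\mathbb{D}^\omega \subseteq M$ and invoke the fact that $M$ is the column span of $F$. The matrix approach buys an explicit formula for $\mathbf{A}$ (useful for the decoding algorithm), while your version is cleaner as a pure existence proof; mathematically they are the same.
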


The largest valuation in Lemma 2 turns out to be the exact minimum decoding delay:

\begin{theorem}
When a causal $\mathbb{D}$-LNC is decodable at node $v$, the minimum decoding delay is equal to the highest valuation among invariant factors of the submodule $\langle \mathbf{f}_e: e \in In(v)\rangle$ in $\mathbb{D}^\omega$.
\end{theorem}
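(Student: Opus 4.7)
The strategy is to combine Lemma~2 with a matching lower bound. Lemma~2 already exhibits a decoding matrix with delay $i_\omega$, so it suffices to prove that no decoding matrix can achieve delay $\delta < i_\omega$, where $i_\omega$ denotes the largest valuation among the invariant factors of the received submodule $\langle \mathbf{f}_e : e \in In(v)\rangle$.

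First I would use the invariant factor decomposition to factor the coding-vector matrix $F := [\mathbf{f}_e]_{e \in In(v)}$ as $F = UDC$, where $U = [\mathbf{u}_1 \mid \cdots \mid \mathbf{u}_\omega]$ is the $\omega \times \omega$ unimodular (i.e., $\mathbb{D}$-invertible) matrix assembling the basis of $\mathbb{D}^\omega$ supplied by the theorem, $D = \mathrm{diag}(z^{i_1},\ldots,z^{i_\omega})$ collects the invariant factors, and $C$ is the $\omega \times |In(v)|$ matrix over $\mathbb{D}$ whose entry $c_{j,e}$ is the coefficient of $z^{i_j}\mathbf{u}_j$ in the expansion of $\mathbf{f}_e$ relative to the basis $\{z^{i_j}\mathbf{u}_j\}$ of the received submodule. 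Assuming $F\mathbf{A} = z^\delta \mathbf{I}$, I would then left-multiply by $U^{-1}$ to obtain the clean identity $DC\mathbf{A} = z^\delta U^{-1}$, and deduce the bound by inspecting its $\omega$-th row. On the left, this row equals $z^{i_\omega}$ times the $\omega$-th row of $C\mathbf{A}$, so every entry is divisible by $z^{i_\omega}$. On the right, the $\omega$-th row of $U^{-1}$ must contain at least one unit entry (otherwise the whole row would lie in $z\mathbb{D}^\omega$, forcing $\det(U^{-1})$ into the maximal ideal and contradicting unimodularity), so some entry of the $\omega$-th row of $z^\delta U^{-1}$ has valuation exactly $\delta$. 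Matching this against the divisibility by $z^{i_\omega}$ on the left forces $\delta \geq i_\omega$, which gives the matching lower bound.

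The only subtleties I anticipate are the bookkeeping behind the factorization $F = UDC$ (a left unimodular factor $U$ is the right object because $\langle \mathbf{f}_e\rangle$ is a submodule of $\mathbb{D}^\omega$, not a quotient) and the elementary observation that every row of a unimodular matrix over a DVR contains a unit entry. Neither step is a serious obstacle; after these are in place the proof reduces to a one-line valuation comparison of the two sides of $DC\mathbf{A} = z^\delta U^{-1}$.
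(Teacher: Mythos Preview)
Your proof is correct and follows essentially the same route as the paper's: factor $[\mathbf{f}_e]_{e\in In(v)}$ through the invariant-factor basis as $UDC$, then compare valuations in the $\omega$-th row of the resulting matrix identity. The only cosmetic difference is that the paper right-multiplies by $U$ at the end (obtaining $DC\mathbf{A}\,U = z^\delta \mathbf{I}$), so that the diagonal entry $z^\delta$ is visibly divisible by $z^{i_\omega}$, whereas you instead argue directly that the last row of $U^{-1}$ must contain a unit.
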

\begin{proof}
According to the theorem of invariant factor decomposition of a free module over a PID, there is a basis $\left\{\mathbf{u}_1, \cdots, \mathbf{u}_\omega\right\}$ of $\mathbb{D}^\omega$ and nonnegative integers $i_1 \leq \cdots \leq i_\omega$ such that $z^{i_1}\mathbf{u}_1, \cdots, z^{i_\omega}\mathbf{u}_\omega$ form a basis of $\langle \mathbf{f}_e: e \in In(v)\rangle$ in $\mathbb{D}^\omega$.

Lemma 2 has shown that there exists an $|In(v)|\times \omega$ matrix $\mathbf{A}$ over $\mathbb{D}$ such that $[\mathbf{f}_e]_{e\in In(v)} \cdot \mathbf{A} = z^{i_\omega} \mathbf{I}$. Now let $\mathbf{A}$ be an arbitrary $|In(v)|\times \omega$ matrix over $\mathbb{D}$ such that $[\mathbf{f}_e]_{e\in In(v)} \cdot \mathbf{A} = z^i \mathbf{I}$ for some $i \geq 0$. It suffices to show that $i \geq i_\omega$. Because every $\mathbf{f}_e, e\in In(v)$, is generated by $z^{i_1}\mathbf{u}_1, \cdots, z^{i_\omega}\mathbf{u}_\omega$, there exists an $\omega \times |In(v)|$ matrix $\mathbf{M}$ over $\mathbb{D}$ such that
\[
[z^{i_j}\mathbf{u}_j]_{1\leq j \leq \omega} \cdot \mathbf{M} = [\mathbf{f}_e]_{e\in In(v)}
\]
Thus,
\[
~[\mathbf{u}_j]_{1\leq j \leq \omega} \cdot \left[\begin{matrix} z^{i_1} & & \mathbf{0} \\ & \ddots & \\ \mathbf{0} & & z^{i_\omega} \end{matrix} \right] \cdot \mathbf{M} \cdot \mathbf{A}
= [z^{i_j}\mathbf{u}_j]_{1\leq j \leq \omega} \cdot \mathbf{M} \cdot \mathbf{A}
= [\mathbf{f}_e]_{e\in In(v)} \cdot \mathbf{A}
= z^i \mathbf{I}
\]
where a bolded 0 stands for a cluster of zero entries. Consequently,
\[
\left[\begin{matrix} z^{i_1} & & \mathbf{0} \\ & \ddots & \\ \mathbf{0} & & z^{i_\omega} \end{matrix} \right]\cdot\mathbf{M}\cdot\mathbf{A} = z^i [\mathbf{u}_j]_{1\leq j \leq \omega}^{-1},
\]
and then
\[
\left[\begin{matrix} z^{i_1} & & \mathbf{0} \\ & \ddots & \\ \mathbf{0} & & z^{i_\omega} \end{matrix} \right]\cdot\mathbf{M}\cdot\mathbf{A}\cdot[\mathbf{u}_j]_{1\leq j \leq \omega} = z^i \mathbf{I}
\]
All entries in the last row of the product matrix on the left-hand side are divisible by $z^{i_\omega}$. Thus $i \geq i_\omega$. \hfill $\blacksquare$
\end{proof}

The invariant factors of the submodule $\langle \mathbf{f}_e: e \in In(v)\rangle$ in $\mathbb{D}^\omega$ can be calculated as follows (See \cite{Brown_Book} for example.) When $j$ is less than or equal to the rank of $\langle \mathbf{f}_e: e\in In(v)\rangle$, let $\Delta_j$ denote the greatest common divisor, up to a unit factor, of the determinants of all $j\times j$ submatrices in $[\mathbf{f}_e]_{e\in In(v)}$. Then, the invariant factors are $\Delta_1$, $\Delta_2/\Delta_1$, $\Delta_3/\Delta_2, \Delta_4/\Delta_3, \cdots$.

\begin{corollary}
When a causal $\mathbb{D}$-LNC is decodable at node $v$, the minimum decoding delay is equal to the valuation of $\Delta_\omega/\Delta_{\omega-1}$.
\end{corollary}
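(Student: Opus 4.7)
The plan is to derive this statement directly from Theorem 3 combined with the determinantal-divisor formula for invariant factors that is recalled in the paragraph immediately preceding the corollary. By Theorem 3, the minimum decoding delay at $v$ equals the largest valuation $i_\omega$ among the invariant factors $z^{i_1},\ldots,z^{i_\omega}$ of the submodule $\langle \mathbf{f}_e : e\in In(v)\rangle$ in $\mathbb{D}^\omega$. So the only remaining task is to identify $z^{i_\omega}$ explicitly in terms of the quantities $\Delta_j$ computed from the matrix $[\mathbf{f}_e]_{e\in In(v)}$.

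The invariant factors of a free submodule of $\mathbb{D}^\omega$ of full rank $\omega$ form a divisibility chain, and the determinantal divisor formula lists them, in order, as $\Delta_1,\, \Delta_2/\Delta_1,\, \ldots,\, \Delta_\omega/\Delta_{\omega-1}$. Under the ordering $i_1 \leq i_2 \leq \cdots \leq i_\omega$ adopted in the paper, the $j$-th term of this list equals $z^{i_j}$ up to a unit, so the last term, $\Delta_\omega/\Delta_{\omega-1}$, has valuation $i_\omega$. Combining this identification with Theorem 3 yields that the minimum decoding delay equals the valuation of $\Delta_\omega/\Delta_{\omega-1}$, as claimed.

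There is no real obstacle here: the corollary is a packaging of Theorem 3 with a standard fact about the Smith normal form over a PID, cited to \cite{Brown_Book}. The only thing to be careful about is the ordering convention, namely that the determinantal-divisor list $\Delta_1,\Delta_2/\Delta_1,\ldots$ matches the increasing-valuation convention $i_1\le\cdots\le i_\omega$ used by the paper, so that ``largest invariant factor'' corresponds to the $\omega$-th term $\Delta_\omega/\Delta_{\omega-1}$ rather than to $\Delta_1$. Once this matching is noted, the proof reduces to one line.
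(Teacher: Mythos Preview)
Your proof is correct and matches the paper's approach exactly: the paper states Corollary 4 without a separate proof, treating it as an immediate consequence of Theorem 3 together with the determinantal-divisor description of invariant factors given in the preceding paragraph. There is nothing to add.
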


The special case of this corollary for $\mathbb{D} = \mathbb{F}[(D)]$ has also been deduced in \cite{Guo_Cai_IEICE}.

\begin{example_newdef}
Let $p = 3$ and $\mathbb{Z}_{(3)}$ denote the ring of rational $p$-adic integers, that is, rational numbers with denominators not divisible by 3. This ring qualifies as a DVR with the uniformizer $z = 3$. Fig. 1 prescribes the coding coefficients of a causal $\mathbb{Z}_{(3)}$-LNC over a cyclic network with $\omega = 2$ outgoing channels from the source. The figure also shows the coding vectors. For the particular node labeled by $v$, the matrix $[\mathbf{f}_e]_{e\in In(v)} = \left[ \begin{matrix} 2 & -z \\ 0 & -z \end{matrix}\right]$ over the DVR $\mathbb{Z}_{(3)}$. Thus, $\Delta_1 = 1, \Delta_2 = z$, and $\Delta_\omega/\Delta_{\omega-1} = z$ up to unit factors. By Corollary 4, the LNC is decodable at $v$ with the minimum delay 1. In this case the unique decoding matrix with delay 1 is $\left[ \begin{matrix} z/2 & ~-z/2 \\ 0 & -1 \end{matrix} \right]$. \hfill $\blacksquare$
\end{example_newdef}

\begin{figure}[htbp]
\centering
\scalebox{.9}
{\includegraphics{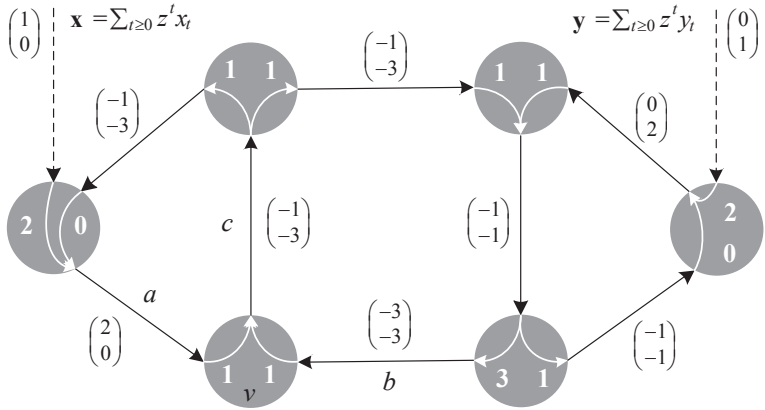}}
\caption{Let $p = 3$ and $\mathbb{Z}_{(3)}$ denote the ring of rational $p$-adic integers. This ring qualifies as a DVR with the uniformizer $z = 3$. The \emph{Shuttle Network} is a cyclic network with $\omega = 2$ outgoing channels from the source. A causal $\mathbb{Z}_{(3)}$-LNC is prescribed by coding coefficients and coding vectors. For the particular node labeled by $v$, the incoming channels are $a$ and $b$ and the outgoing channel is $c$. Example 1 asserts the decodability of the LNC at $v$ with the minimum delay 1. Example 2 in Section 3 expresses the data unit transmitted over channel $c$ as a power series over a set of coset representatives of $\mathbb{Z}_{(3)}/3\mathbb{Z}_{(3)}$. Example 3 in Section 4 illustrates the procedure of time-variant decoding at the node $v$.}
\end{figure}

\section{Power Series Expression of DVR-Based LNC}\label{Sec:Power_Series_Expression}
Let $R \subset \mathbb{D}$ be a complete set of coset representatives of $\mathbb{D}$ over the ideal $z\mathbb{D}$ and assume that $0 \in R$. Thus all nonzero elements of $R$ are units in $\mathbb{D}$. There can be many choices for such a set $R$. In the special case when $\mathbb{D} = \mathbb{F}[(D)]$, we simply fix $R = \mathbb{F}$. In general, $R$ is not closed under addition and multiplication.

\begin{proposition}
Every element of $\mathbb{D}$ can be uniquely expressed as an infinite series $\sum_{t\geq 0} a_tz^t$ with coefficients $a_t \in R$.
\end{proposition}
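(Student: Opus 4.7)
The proof naturally splits into existence and uniqueness, both driven by the coset decomposition $\mathbb{D}=\bigsqcup_{a\in R}(a+z\mathbb{D})$ guaranteed by the hypothesis on $R$. The plan is a repeated peeling-off of the lowest-order coefficient.

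For existence, given $x\in\mathbb{D}$ I would construct $(a_t)_{t\geq 0}$ recursively. Set $x_0=x$. Since $R$ is a complete set of coset representatives of $\mathbb{D}/z\mathbb{D}$, there is a unique $a_0\in R$ with $x_0-a_0\in z\mathbb{D}$; using that $z$ is a non-zero-divisor in the domain $\mathbb{D}$, write $x_0-a_0=z\,x_1$ for a unique $x_1\in\mathbb{D}$. Iterating, one produces $a_t\in R$ and $x_{t+1}\in\mathbb{D}$ with $x_t-a_t=z\,x_{t+1}$. Telescoping gives $x-\sum_{t=0}^{n}a_tz^t=z^{n+1}x_{n+1}\in z^{n+1}\mathbb{D}$ for every $n\geq 0$, which is exactly what it should mean for the partial sums to converge $z$-adically to $x$, i.e.\ $x=\sum_{t\geq 0}a_tz^t$.

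For uniqueness, suppose $\sum_{t\geq 0}a_tz^t$ and $\sum_{t\geq 0}b_tz^t$ both represent the same $x$ in the above sense. Reducing the identity $\sum(a_t-b_t)z^t\in z\mathbb{D}$ modulo $z$ forces $a_0\equiv b_0\pmod{z}$, and since both lie in $R$, $a_0=b_0$. Cancelling the common $a_0=b_0$, dividing by $z$ (again permissible because $z$ is a non-zero-divisor and the remaining difference lies in $z\mathbb{D}$), and inducting on $t$ yields $a_t=b_t$ for all $t$.

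The only conceptual point requiring care — and the place I would pause longest in writing the proof — is the interpretation of the infinite series. The DVR $\mathbb{D}$ need not be $z$-adically complete (for example $\mathbb{Z}_{(3)}$ from Example 1 is not), so the series is to be read in the $z$-adic topology on $\mathbb{D}$: the partial sums form a Cauchy sequence that, by the construction above, converges to the \emph{preexisting} element $x$, so no completeness is needed. Uniqueness of that limit is underwritten by $\bigcap_{n\geq 0}z^n\mathbb{D}=0$, which holds in any DVR since every nonzero element has finite valuation. All remaining steps are routine manipulations with the valuation.
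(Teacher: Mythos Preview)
Your proof is correct and follows essentially the same approach as the paper: recursive peeling of the lowest-order coefficient for existence, and coset-based induction for uniqueness, with the fact $\bigcap_{n\geq 0} z^n\mathbb{D}=0$ (for which the paper cites Nakayama's lemma) underwriting convergence. Your explicit remark on interpreting the infinite series $z$-adically when $\mathbb{D}$ is not complete is a clarification the paper leaves implicit, but the argument is otherwise identical.
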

\begin{proof}
An arbitrary element $x_0$ of $\mathbb{D}$ can be expressed in the series form as follows. First, let $a_0 \in R$ be the coset representative of $x_0 + z\mathbb{D}$ and write $x_1 = (x_0-a_0)/z$. Then, iteratively for all $t\geq 1$, let $a_t \in R$ be the coset representative of $x_t + z\mathbb{D}$ and write $x_{t+1} = (x_t-a_t)/z$. By induction on $k$, we find $x_0 = \sum_{t\geq 0} a_tz^t$ modulo $z^k$ for all $k \geq 0$. Hence, $x_0-\sum_{t\geq 0} a_tz^t \in \bigcap_{t>0}z^t\mathbb{D}$. On the other hand, it can be derived, for example from \emph{Nakayama¡¯s lemma} in commutative algebra (See \cite{Brown_Book} for example,) that $\bigcap_{t>0}z^t\mathbb{D} = \{0\}$. Thus $x_0 = \sum_{t\geq 0}a_tz^t$.

Next we show the uniqueness of the power series expression. Suppose that $\sum_{t\geq 0}a_tz^t = \sum_{t\geq 0}b_tz^t$ with $a_t, b_t \in R$. Since $\sum_{t\geq 0}a_tz^t$ belongs to the coset $a_0 + z\mathbb{D}$ and $\sum_{t\geq 0}b_tz^t$ to the coset $b_0 + z\mathbb{D}$, both $a_0$ and $b_0$ represent the same coset and hence are equal. By induction on $t$, we then find $a_t = b_t$ for all $t$. \hfill $\blacksquare$
\end{proof}

Applying Proposition 5 to every entry in an arbitrary matrix $\mathbf{M}$ over $\mathbb{D}$, we can also express $\mathbf{M}$ in a unique way as $\sum_{j\geq 0} z^jM_j$, where every $M_j$ is a matrix over $R$. In particular when $\mathbf{M}$ is the row vector $\mathbf{m}$ of data units generated by the source, we adopt the following notation:
\begin{Bullet}
$\mathbf{m} = \sum_{j\geq 0} z^j m_j$
\end{Bullet}

\noindent In CNC, the source pipelines data in the time-divisioning manner. In every unit time, a \emph{message} consisting of $\omega$ symbols is produced. The above power series expression of $\mathbf{m}$ is a generalized form of such time-divisioning with each $m_j$ generalizing a message in CNC.

Similarly for a causal $\mathbb{D}$-LNC, we write:
\begin{Bullet}
$[\mathbf{f}_e]_{e\in In(v)} = \sum_{j\geq 0}z^jF_j$
\end{Bullet}
\begin{Bullet}
The $|In(v)|$-dim row vector $\mathbf{r}$ of incoming data units to $v = \sum_{j\geq 0}z^jr_j$
\end{Bullet}

\noindent The equation $\mathbf{r} = \mathbf{m}\cdot[\mathbf{f}_e]_{e\in In(v)}$ now becomes
\[\sum_{j=0}^{\infty} z^jr_j = \left(\sum_{j=0}^\infty z^jm_j\right)\cdot\left(\sum_{j=0}^\infty z^jF_j\right)\]
\noindent Or, equivalently,
\begin{equation}
\sum_{j=0}^t z^jr_j = \left(\sum_{j=0}^t z^jm_j\right)\cdot\left(\sum_{j=0}^t z^jF_j\right) \mod z^{t+1} ~~~\mathrm{for}~~~t\geq 0
\end{equation}

Even though coset representatives are not closed under addition and multiplication, the arithmetic for power series over $R$ can be implemented by the following algorithm.

\vspace{6pt}

\begin{algorithm}
Denote by $\sigma$ the natural mapping from $\mathbb{D}$ onto $R$ such that $\sigma(a) = a$ modulo $z$ for all $a \in \mathbb{D}$. Then, the sum $\sum_{t\geq 0}a_tz^t + \sum_{t\geq 0}b_tz^t$ can be expressed as a power series $\sum_{t\geq 0}c_tz^t$ over $R$, where the coefficients $c_t$ are iteratively calculated together with a sequence $\{r_t\}_{t\geq 0}$ over $\mathbb{D}$ as follows.
\begin{Bullet}
$c_t = \sigma(a_t + b_t + r_t)$, where $r_0 = 0$
\end{Bullet}
\begin{Bullet}
$r_{t+1} = (a_t + b_t + r_t - c_t)/z$
\end{Bullet}
Meanwhile, the product $\left(\sum_{t\geq0}a_t z^t\right)\cdot\left(\sum_{t\geq 0}b_tz^t\right)$ can be expressed as a power series $\sum_{t\geq 0} d_tz^t$ over $R$, where the coefficients $d_t$ are iteratively calculated together with a sequence $\{s_t\}_{t\geq 0}$ over $\mathbb{D}$ as follows.
\begin{Bullet}
$d_t = \sigma\left(\sum_{j=0}^ta_jb_{t-j} + s_t\right)$, where $s_0 = 0$
\end{Bullet}
\begin{Bullet}
$s_{t+1} = \left(\sum_{j=0}^t a_j b_{t-j} + s_t - d_t\right)/z$ \hfill $\blacksquare$
\end{Bullet}
\end{algorithm}

\vspace{6pt}

\begin{example_newdef}
Consider the same causal $\mathbb{Z}_{(3)}$-LNC described in Example 1 and Fig. 1. A complete set of coset representatives of $\mathbb{Z}_{(3)}$ over the ideal $3\mathbb{Z}_{(3)}$ is $R = \{0, 1, 2\}$. With respect to $R$, the power series expression of the coding vectors for channels $a$, $b$ and $c$ are, respectively,
\[
\mathbf{f}_a = \left(\begin{matrix} 2 \\ 0 \end{matrix}\right),
\mathbf{f}_b = z\left(\begin{matrix} 2 \\ 2 \end{matrix}\right)+z^2\left(\begin{matrix} 2 \\ 2 \end{matrix}\right)+\cdots,
\mathbf{f}_c = \left(\begin{matrix} 2 \\ 0 \end{matrix}\right)+z\left(\begin{matrix} 2 \\ 2 \end{matrix}\right)+z^2\left(\begin{matrix} 2 \\ 2 \end{matrix}\right)+\cdots
\]

Let the source generate a power series of messages $\mathbf{m} = (\sum_{t\geq 0}z^tx_t, \sum_{t\geq 0}z^ty_t)$ over the symbol alphabet $R$. The data unit transmitted over a channel can be expressed as a power series over $R$. Take the channel $c$ as the example. We want to express the data unit $\mathbf{m}\cdot\mathbf{f}_c$ as a power series $\sum_{t\geq 0}z^td_t$ over $R$. Let $\sigma$ be the operation of modulo 3. Following Algorithm 1, the coefficients $d_t$ and auxiliary parameters $s_t$ are calculated iteratively as:
\begin{Bullet}
$s_0 = 0$ and $d_0 = \sigma(2x_0)$
\end{Bullet}
\begin{Bullet}
$s_1 = (2x_0 - d_0)/z$ and $d_1 = \sigma(2x_0 + 2y_0 + 2x_1 + s_1)$
\end{Bullet}
\begin{Bullet}
$s_2 = (2x_0 + 2y_0 + 2x_1 + s_1 - d_1)/z$ and $d_2 = \sigma(2x_0+2y_0+2x_1+2y_1+2x_2+s_2)$
\end{Bullet}
\begin{Bullet}
And so on. \hfill $\blacksquare$
\end{Bullet}
\end{example_newdef}

\section{Time-Variant Decoding of DVR-Based LNC}\label{Sec:Time_vriant_decoding}
Hereafter decoding in the sense of Definition 1 will be referred to as \emph{time-invariant decoding}. The power series expression of DVR-based LNC gives rise to a more general way to formulate the notion of decoding.

\begin{definition}
A causal $\mathbb{D}$-LNC is \emph{time-variant decodable} at a node $v$ with \emph{delay} $\delta$ when every $m_t$ can be $\mathbb{D}$-linearly calculated from $r_0, \cdots, r_{t+\delta}$ and $F_0, \cdots, F_{t+\delta}$. More explicitly, for all $t\geq 0$, there are $|In(v)|\times \omega$ matrices $A_{t,0}, \cdots, A_{t,t+\delta}$ over $R$ that are derivable from $F_0, \cdots, F_{t+\delta}$ such that
\[
\sum_{j=0}^t z^{j+\delta}m_j = \left(\sum_{j=0}^{t+\delta} z^jr_j\right)\cdot\left(\sum_{j=0}^{t+\delta} z^jA_{t,j}\right) \mod z^{t+\delta+1}
\]
\end{definition}

The word ``time'' in this definition refers to exponent of $z$, which generalizes the unit-time delay $D$ in CNC. As explained in the proof of the next theorem, time-invariant decoding with delay $\delta$ qualifies as a special case of time-variant decoding with the same delay. This offers a new interpretation of the time-invariant decoding delay. The notion of time-variant decoding has earlier been formulated in \cite{Guo_Cai_Sun} for CNC, which now coincides with Definition 6 for the special case when $\mathbb{D} = \mathbb{F}[(D)]$. The equivalence between items (b) and (c) in the next theorem has also been given in \cite{Guo_Cai_Sun} for the special case of CNC.

\begin{theorem}
For a causal $\mathbb{D}$-LNC, the following statements are equivalent at every node $v$:

$\mathrm{(a)}$ Time-invariant decodability with delay $\delta$

$\mathrm{(b)}$ Time-variant decodability with delay $\delta$

$\mathrm{(c)}$ $\mathbb{D}$-linear calculation of $m_0$ from $r_0, \cdots, r_\delta$ and $F_0,\cdots, F_\delta$
\end{theorem}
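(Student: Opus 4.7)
The plan is to prove the cycle (a) $\Rightarrow$ (b) $\Rightarrow$ (c) $\Rightarrow$ (a); the first two implications are fairly direct, while the third is the technical heart of the proof. For (a) $\Rightarrow$ (b), I would start with a time-invariant decoding matrix $\mathbf{A}$ over $\mathbb{D}$ satisfying $[\mathbf{f}_e]_{e\in In(v)} \cdot \mathbf{A} = z^\delta \mathbf{I}$, apply Proposition 5 entrywise to write $\mathbf{A} = \sum_{j\geq 0} z^j A_j$ with each $A_j$ a matrix over $R$, and then set $A_{t,j} := A_j$ independently of $t$. The identity $\mathbf{r} \cdot \mathbf{A} = \mathbf{m} \cdot [\mathbf{f}_e]_{e\in In(v)} \cdot \mathbf{A} = z^\delta \mathbf{m}$, reduced modulo $z^{t+\delta+1}$, then collapses directly to the defining equation in Definition 6; in particular this substantiates the paper's earlier claim that time-invariant decoding is a special case of time-variant decoding with the same delay. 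For (b) $\Rightarrow$ (c), I would simply specialize the time-variant decoding equation to $t = 0$, which displays $m_0$ as a $\mathbb{D}$-linear function of $r_0, \ldots, r_\delta$ via matrices derivable from $F_0, \ldots, F_\delta$.

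For (c) $\Rightarrow$ (a), I would suppose the $\mathbb{D}$-linear calculation of $m_0$ is realized by some $|In(v)| \times \omega$ matrix $B$ over $\mathbb{D}$, derivable from $F_0, \ldots, F_\delta$, such that $z^\delta m_0 \equiv (\sum_{j=0}^\delta z^j r_j) \cdot B \pmod{z^{\delta+1}}$ for every source message $\mathbf{m}$. Substituting equation (2) and specializing to messages with $m_0$ arbitrary and all later coefficients zero, I would extract the matrix identity $(\sum_{j=0}^\delta z^j F_j) \cdot B \equiv z^\delta \mathbf{I} \pmod{z^{\delta+1}}$. Writing $[\mathbf{f}_e]_{e\in In(v)} = \sum_{j=0}^\delta z^j F_j + z^{\delta+1} X$ for a suitable matrix $X$ over $\mathbb{D}$, this becomes $[\mathbf{f}_e]_{e\in In(v)} \cdot B = z^\delta (\mathbf{I} + z Y)$ for some $\omega \times \omega$ matrix $Y$ over $\mathbb{D}$. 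The crucial observation is that $\mathbf{I} + z Y$ is invertible over $\mathbb{D}$: its determinant is congruent to $1$ modulo $z$, hence lies outside the unique maximal ideal $z\mathbb{D}$ and is a unit in the DVR. Setting $\mathbf{A} := B \cdot (\mathbf{I} + z Y)^{-1}$ then produces an $|In(v)| \times \omega$ matrix over $\mathbb{D}$ with $[\mathbf{f}_e]_{e\in In(v)} \cdot \mathbf{A} = z^\delta \mathbf{I}$, which is exactly (a).

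The hard part will be the lifting step in (c) $\Rightarrow$ (a). Promoting an identity that only holds modulo $z^{\delta+1}$ to an exact identity over $\mathbb{D}$ depends crucially on the local-ring nature of the DVR: elements congruent to $1$ modulo the maximal ideal are automatically units, which upgrades $\mathbf{I} + z Y$ to a unit of the matrix algebra. Without this DVR-specific feature, no such bridge from a decoder with finite lookahead back to the global decoding matrix in (a) would be available.
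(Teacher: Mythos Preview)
Your proposal is correct and follows essentially the same route as the paper: the same cycle (a) $\Rightarrow$ (b) $\Rightarrow$ (c) $\Rightarrow$ (a), the same power-series expansion of the time-invariant decoder for (a) $\Rightarrow$ (b), the same specialization to $t=0$ for (b) $\Rightarrow$ (c), and the same lifting argument for (c) $\Rightarrow$ (a) via the invertibility of $\mathbf{I}+zY$ over the DVR. The only cosmetic difference is that you isolate the tail $z^{\delta+1}X$ of $[\mathbf{f}_e]_{e\in In(v)}$ explicitly, whereas the paper absorbs it directly into the error matrix $\mathbf{M}$; the substance is identical.
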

\begin{proof}
The definition of time-variant decoding directly implies (b) $\Rightarrow$ (c).

To prove (a) $\Rightarrow$ (b), we shall show that time-invariant decoding with delay $\delta$ is a special case of time-variant decoding with the same delay. Let the matrix $\mathbf{A}$ over $\mathbb{D}$ be such that $(\sum_{j\geq 0} z^j F_j)\cdot\mathbf{A} = z^\delta \mathbf{I}$. Writing $\mathbf{A} = \sum_{j\geq 0} z^jA_j$, where the matrices $A_j$ are over $R$, we have, for all $t \geq 0$,
\[
z^\delta \mathbf{I} = \left( \sum_{j=0}^{t+\delta} z^jF_j \right)\cdot\left( \sum_{j=0}^{t+\delta}z^jA_j \right) \mod z^{t+\delta+1}
\]
Consequently, for all $t \geq 0$,
\begin{equation*}
\begin{aligned}
&~~~\left( \sum_{j=0}^{t+\delta}z^jr_j \right)\cdot\left( \sum_{j=0}^{t+\delta}z^jA_j \right) \\
&= \left( \sum_{j=0}^{t+\delta}z^jm_j \right)\cdot\left( \sum_{j=0}^{t+\delta}z^jF_j \right)\cdot\left( \sum_{j=0}^{t+\delta}z^jA_j \right)\\
&= z^\delta\left( \sum_{j=0}^{t+\delta}z^jm_j \right) \\
&= \sum_{j=0}^t z^{j+\delta}m_j \mod z^{t+\delta+1}
\end{aligned}
\end{equation*}

As the matrix $\mathbf{A}$ is calculable from $[\mathbf{f}_e]_{e\in In(v)} = \sum_{j\geq 0}z^jF_j$, each $A_j$ is calculable from $F_0, \cdots, F_{j+\delta}$. Thus the $\mathbb{D}$-LNC is time-variant decodable with delay $\delta$.

It remains to prove (c) $\Rightarrow$ (a). Let $A_0, \cdots, A_\delta$ be $|In(v)|\times\omega$ matrices over $R$ that are derivable from $F_0, \cdots, F_\delta$ such that
\[
z^\delta m_0 = \left(\sum_{j=0}^\delta z^jr_j\right)\cdot\left(\sum_{j=0}^\delta z^jA_{t,j}\right) \mod z^{\delta+1}
\]
Substituting (2) into the above equation,
\[
z^\delta m_0 = \left(\sum_{j=0}^\delta z^jm_j\right)\cdot\left(\sum_{j=0}^\delta z^jF_j\right)\cdot\left(\sum_{j=0}^\delta z^jA_{0,j}\right) \mod z^{\delta+1}
\]
As the above equation holds for all possible $m_0, m_1, \cdots, m_\delta$,
\[
z^\delta m_0 = m_0\cdot \left(\sum_{j=0}^\delta z^jF_j\right)\cdot\left(\sum_{j=0}^\delta z^jA_{0,j}\right) \mod z^{\delta+1}
\]
and hence
\[
z^\delta\mathbf{I} = \left(\sum_{j=0}^\delta z^jF_j\right)\cdot\left(\sum_{j=0}^\delta z^jA_{j}\right) \mod z^{\delta+1}
\]
Then,
\[
z^\delta\mathbf{I} = \left(\sum_{j=0}^\infty z^jF_j\right)\cdot\left(\sum_{j=0}^\delta z^jA_{j}\right) \mod z^{\delta+1}
\]
Thus there exists an $\omega\times\omega$ matrix $\mathbf{M}$ over $\mathbb{D}$ such that
\[
z^\delta(\mathbf{I} + z\mathbf{M}) = \left(\sum_{j=0}^\infty z^j F_j \right)\cdot\left(\sum_{j=0}^\delta z^j A_j\right)
\]
Because $\det(\mathbf{I} + z\mathbf{M}) = 1$ modulo $z$, the matrix $\mathbf{I} + z\mathbf{M}$ is invertible over $\mathbb{D}$. Thus
\[
z^\delta\mathbf{I} = \left(\sum_{j=0}^\infty z^j F_j \right)\cdot\left(\sum_{j=0}^\delta z^j A_j\right)\cdot(\mathbf{I} + z\mathbf{M})^{-1}
\]
This establishes $\left(\sum_{j=0}^\delta z^j A_j\right)\cdot(\mathbf{I} + z\mathbf{M})^{-1}$ as the desired time-invariant decoding matrix. \hfill $\blacksquare$
\end{proof}

Because of Theorem 7, there is no distinction between time-invariant decodability and time-variant decodability. However, the distinction still remains in decoding matrices and in decoding algorithms. Assuming $\mathbb{D}$-linear calculation of $m_0$ from $r_0, \cdots, r_\delta$ and $F_0, \cdots, F_\delta$, the proof of ``(c) $\Rightarrow$ (a)'' in the above provides an algorithm for calculating the time-invariant decoding matrix. The calculation though involves the inversion of the matrix $\mathbf{I}+z\mathbf{M}$ over $\mathbb{D}$, which is in turn computed from $[\mathbf{f}_e]_{e\in In(v)}$, whose power series expression may contain infinite terms. This handicaps the implementation of the time-invariant decoding mechanism. The following algorithm first computes a time-variant decoding matrix $\mathbf{A}_0$ over $\mathbb{D}$ from $r_0, \cdots, r_\delta$ and $F_0, \cdots, F_\delta$, and then dynamically calculates $m_t$ from $r_0, \cdots, r_{t+\delta}$ and $F_0,\cdots, F_{t+\delta}$ for all $t > 0$ without computing the matrices $A_{t,0}, \cdots, A_{t,t+\delta}$ formulated in Definition 6.

\begin{algorithm}
Assume that a causal $\mathbb{D}$-LNC is decodable at a node $v$ with the delay $\delta$. The time-variant decoding matrices $A_0, \cdots, A_\delta$ over $R$ can be computed from $F_0, \cdots, F_\delta$ as follows. By applying the invariant factor decomposition algorithm (see Chapter 15 in \cite{Brown_Book} for example), calculate:
\begin{Bullet}
the invariant factors $z^{i_1}, \cdots, z^{i_\omega}$ of $\sum_{j=0}^\delta z^jF_j$,
\end{Bullet}
\begin{Bullet}
an $\omega\times\omega$ invertible matrix $\mathbf{U}$ over $\mathbb{D}$, and
\end{Bullet}
\begin{Bullet}
an $|In(v)|\times|In(v)|$ invertible matrix $\mathbf{V}$ over $\mathbb{D}$ such that
\[
\sum_{j=0}^\delta z^jF_j = \mathbf{U}\cdot\left[
\begin{matrix}
z^{i_1} & & \mathbf{0} & \mathbf{0} \\
        & \ddots & & \vdots \\
\mathbf{0} & & z^{i_\omega} & \mathbf{0}
\end{matrix}
\right]\cdot\mathbf{V}
\]
\end{Bullet}
Then, calculate
\[
\mathbf{A} = \mathbf{V}^{-1}\cdot
\left[\begin{matrix}
z^{\delta-i_1} & & \mathbf{0} \\
        & \ddots & \\
\mathbf{0} & & z^{\delta-i_\omega} \\
\mathbf{0} & \cdots & \mathbf{0} \end{matrix}
\right]\cdot \mathbf{U}^{-1}~~\mathrm{over}~\mathbb{D}
\]
Let $A_0, \cdots, A_\delta$ denote the first $\delta + 1$ matrix coefficients over $R$ in the power series expression of $\mathbf{A}$. They are the desired time-variant decoding matrices.

For brevity, denote by $\mathbf{A}_0$ the matrix $\sum_{j=0}^\delta z^jA_j$ over $\mathbb{D}$. Then, $m_0$ can be decoded from $r_0, \cdots, r_\delta$ by
\begin{equation}
m_0 = \sigma\left(\frac{\left(\sum_{j=0}^\delta z^j r_j\right)\cdot\mathbf{A}_0}{z^\delta}\right)
\end{equation}
where $\sigma$ is the mapping defined in Algorithm 1 and applies to a vector in a componentwise way.

For all $t \geq 1$, $m_t$ can be iteratively decoded from $r_0, \cdots, r_{t+\delta}$ and $F_0, \cdots, F_{t+\delta}$ by
\begin{equation}
m_t = \sigma\left(\frac{\left(\sum_{j=0}^{t+\delta}z^jr_j - \left(\sum_{j=0}^{t-1}z^jm_j\right)\cdot\left(\sum_{j=0}^{t+\delta}z^jF_j\right)\right)\cdot\mathbf{A}_0}{z^{\delta+t}}\right)
\end{equation}
\hfill $\blacksquare$
\end{algorithm}

\underline{\emph{Justification}}~~In the algorithm, the resultant $A_0, \cdots, A_\delta$ satisfy
\begin{equation*}
\begin{aligned}
&~~~\left(\sum_{j=0}^\delta z^j F_j\right)\cdot\left(\sum_{j=0}^\delta z^j A_j\right) \\
&= \left(\sum_{j=0}^\delta z^j F_j\right)\cdot\left(\sum_{j=0}^\infty z^j A_j\right) \\
&= \mathbf{U}\cdot
\left[\begin{matrix}
z^{i_1} & & \mathbf{0} & \\
 & \ddots & & \mathbf{0} \\
\mathbf{0} & & z^{i_\omega} &
\end{matrix}\right]
\cdot\mathbf{V} \cdot \mathbf{V}^{-1} \cdot
\left[\begin{matrix}
z^{\delta-i_1} & & \mathbf{0}\\
 & \ddots & \\
\mathbf{0} & & z^{\delta - i_\omega} \\
& \mathbf{0} &
\end{matrix}\right]
\cdot \mathbf{U}^{-1} \\
&= z^\delta\mathbf{I} \mod z^{\delta+1}
\end{aligned}
\end{equation*}
Thus
\begin{equation*}
\begin{aligned}
z^\delta m_0 &= m_0\cdot\left(\sum_{j=0}^\delta z^j F_j \right)\cdot\left(\sum_{j=0}^\delta z^j A_j \right) \\
& = \left(\sum_{j=0}^\delta z^j m_j \right)\cdot\left(\sum_{j=0}^\delta z^j F_j \right)\cdot\left(\sum_{j=0}^\delta z^j A_j \right) \\
& = \left(\sum_{j=0}^\delta z^j r_j \right)\cdot\left(\sum_{j=0}^\delta z^j A_j \right) \mod z^{\delta+1}
\end{aligned}
\end{equation*}
and consequently equation (3) holds.

In order to justify (4), observe that
\begin{equation*}
\begin{aligned}
&~~~~~\sum_{j=0}^{t+\delta} z^j r_j - \left(\sum_{j=0}^{t-1} z^j m_j \right)\cdot\left(\sum_{j=0}^{t+\delta} z^j F_j \right) \\
&= \left(\sum_{j=0}^{t+\delta} z^j m_j \right)\cdot\left(\sum_{j=0}^{t+\delta} z^j F_j \right) - \left(\sum_{j=0}^{t-1} z^j m_j \right)\cdot\left(\sum_{j=0}^{t+\delta} z^j F_j \right) \\
&= \left(\sum_{j=t}^{t+\delta} z^j m_j \right)\cdot\left(\sum_{j=0}^{t+\delta} z^j F_j \right) \\
&= z^t\left(\sum_{j=0}^{\delta} z^j m_{t+j} \right)\cdot\left(\sum_{j=0}^{t+\delta} z^j F_j \right) \mod z^{t+\delta+1}
\end{aligned}
\end{equation*}
and
\[
z^{t+\delta} \mathbf{I} = z^t\left(\sum_{j=0}^\delta z^j F_j \right)\cdot\mathbf{A}_0 = z^t\left(\sum_{j=0}^{t+\delta} z^j F_j \right)\cdot\mathbf{A}_0 \mod z^{t+\delta+1}
\]
Thus
\begin{equation*}
\begin{aligned}
&~~~~~\left(\sum_{j=0}^{t+\delta}z^jr_j - \left(\sum_{j=0}^{t-1}z^jm_j\right)\cdot\left(\sum_{j=0}^{t+\delta}z^jF_j\right)\right)\cdot\mathbf{A}_0 \\
&= z^t\left(\sum_{j=0}^{\delta}z^jm_{t+j}\right)\cdot\left(\sum_{j=0}^{t+\delta}z^jF_j\right)\cdot\mathbf{A}_0  \\
&= z^{t+\delta}\left(\sum_{j=0}^\delta z^j m_{t+j} \right) \\
&= z^{t+\delta}m_t \mod z^{t+\delta+1}
\end{aligned}
\end{equation*}
Consequently, (4) holds. \hfill $\blacksquare$

\vspace{12pt}

In Algorithm 2, every $m_t$, $t \geq 1$, is decoded by iterating the formula (4). The complexity of this decoding process is reduced in the following modified algorithm, which is based on the arithmetic over $R$ described in Algorithm 1.

\vspace{12pt}

\noindent {\bf Algorithm 2'}~~Follow the steps in Algorithm 2 till the formula (3) to establish the matrix $\mathbf{A}_0 = \sum_{j=0}^\delta z^jA_j$ and decode $m_0$. The next routine iteratively decodes $m_t$, $t \geq 1$, from $\mathbf{A}_0$ and the dynamically updated $r_t, \cdots, r_{t+\delta}$.

\vspace{6pt}

\indent For initialization, set $s_0 = s_1 = \cdots = s_{\delta+1} = 0$;

\indent For $t \geq 1$ do

\indent \{

\indent\indent $s_0 := \left(m_{t-1}\cdot F_0 - \sigma(m_{t-1}\cdot F_0)\right)/z$;

\indent\indent For $j \in [0, \delta-1]$ do

\indent\indent \{

\indent\indent\indent $r'_{t+j} := \sigma(r_{t+j} - m_{t-1}\cdot F_{j+1} - s_j)$;

\indent\indent\indent $s_{j+1} := (m_{t-1}\cdot F_{j+1} + s_j + r'_{t+j} - r_{t+j})/z$;

\indent\indent\indent $r_{t+j} := r'_{t+j}$;

\indent\indent \}

\indent\indent $j:= \delta$;

\indent\indent $r'_{t+j} := \sigma\left(r_{t+j} - \sum_{k=0}^{t-1}m_k\cdot F_{t+j-k} -s_j - s_{j+1} \right)$;

\indent\indent $s_{j+1} := \left(\sum_{k=0}^{t-1}m_k\cdot F_{t+j-k} + s_j + s_{j+1} + r'_{t+j} - r_{t+j} \right)/z$;

\indent\indent $r_{t+j} := r'_{t+j}$;

\indent\indent $m_t := \sigma\left(\frac{\left(\sum_{j=0}^\delta z^jr_{t+j}\right)\cdot\mathbf{A}_0}{z^\delta}\right)$;

\indent \} \hfill $\blacksquare$

\vspace{12pt}

The assumption in Algorithm 2, as well as in Algorithm 2', is the decodability of a causal $\mathbb{D}$-LNC at a node $v$ with the delay $\delta$. Such decodability can actually be determined based on the matrix $\sum_{j=0}^\delta z^jF_j$ instead of $\sum_{j=0}^\infty z^jF_j$ :

\begin{corollary}
A causal $\mathbb{D}$-LNC is decodable at node $v$ with delay $\delta$ if and only if
\emph{
\begin{description}
 \item[(5)] \emph{the matrix $\sum_{j=0}^\delta z^jF_j$ is of the full rank $\omega$ and}
 \item[(6)] \emph{$\delta$ is no smaller than the valuation of $\Delta_{\delta,\omega}/\Delta_{\delta,\omega-1}$, where $\Delta_{\delta,t}$ denotes the greatest common divisor of the determinants of all $t\times t$ submatrices in $\sum_{j=0}^\delta z^jF_j$.}
\end{description}
}
\end{corollary}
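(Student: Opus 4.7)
The plan is to reduce decodability with delay $\delta$ to an exact matrix identity involving only the truncation $\mathbf{F}^{(\delta)} := \sum_{j=0}^\delta z^j F_j$, and then to invoke Lemma 2 and Theorem 3 applied to $\mathbf{F}^{(\delta)}$ itself. By Theorem 7, decodability at $v$ with delay $\delta$ is equivalent to the $\mathbb{D}$-linear computability of $m_0$ from $r_0,\ldots,r_\delta$ and $F_0,\ldots,F_\delta$. Substituting equation (2) and letting $m_0$ vary, exactly as in the proof of Theorem~7(c)$\Rightarrow$(a), one sees this is in turn equivalent to the existence of an $|In(v)|\times\omega$ matrix $\mathbf{A}$ over $\mathbb{D}$ satisfying
$$\mathbf{F}^{(\delta)}\cdot\mathbf{A}\equiv z^\delta\mathbf{I}\pmod{z^{\delta+1}}.$$

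The next step is to upgrade this congruence to an exact identity. If $\mathbf{F}^{(\delta)}\mathbf{A} = z^\delta(\mathbf{I}+z\mathbf{N})$ for some $\omega\times\omega$ matrix $\mathbf{N}$ over $\mathbb{D}$, then $\mathbf{I}+z\mathbf{N}$ has determinant $\equiv 1 \pmod z$ and is invertible over $\mathbb{D}$; this is the same invertibility trick used in the proof of Theorem 7. Setting $\mathbf{A}' := \mathbf{A}(\mathbf{I}+z\mathbf{N})^{-1}$ then yields $\mathbf{F}^{(\delta)}\mathbf{A}' = z^\delta\mathbf{I}$ as an exact equation over $\mathbb{D}$, while the exact identity trivially implies the original congruence. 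Because $z^\delta\mathbf{I}$ has nonzero determinant, this exact identity forces $\mathbf{F}^{(\delta)}$ to have full column rank $\omega$, which is the necessity of condition (5).

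It now remains to determine, assuming $\mathbf{F}^{(\delta)}$ has full rank $\omega$, the set of $\delta$ for which $\mathbf{F}^{(\delta)}\mathbf{A}' = z^\delta\mathbf{I}$ is solvable. The proofs of Lemma 2 and Theorem 3 depend only on the Smith normal form of a full-rank matrix over the PID $\mathbb{D}$, so they apply verbatim to $\mathbf{F}^{(\delta)}$: the minimum admissible exponent equals the largest invariant-factor valuation $j_\omega$ of $\mathbf{F}^{(\delta)}$, and every larger $\delta$ is also admissible (by scaling the minimum-delay solution by $z^{\delta-j_\omega}$). The invariant-factor-via-minors formula recalled just before Corollary 4, applied to $\mathbf{F}^{(\delta)}$, gives $j_\omega = \mathrm{val}(\Delta_{\delta,\omega}/\Delta_{\delta,\omega-1})$, which is precisely condition (6). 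The main subtlety of the whole argument lies in the passage from the infinite series $\sum_{j\geq 0}z^jF_j$ used in the definition of decoding to its finite truncation $\mathbf{F}^{(\delta)}$ used in the criterion; Theorem 7 bridges half of this gap, and the invertibility of $\mathbf{I}+z\mathbf{N}$ bridges the rest.
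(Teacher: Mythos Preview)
Your proof is correct and follows essentially the same route as the paper: reduce via Theorem~7(c) to the congruence $\mathbf{F}^{(\delta)}\mathbf{A}\equiv z^\delta\mathbf{I}\pmod{z^{\delta+1}}$, use the invertibility of $\mathbf{I}+z\mathbf{N}$ to pass to an exact identity, and then read off the condition on $\delta$ from the invariant factors of $\mathbf{F}^{(\delta)}$ together with the minors formula. The only cosmetic difference is that you invoke Lemma~2 and Theorem~3 as black boxes applied to the truncated matrix $\mathbf{F}^{(\delta)}$, whereas the paper re-runs the Smith normal form computation explicitly inside the proof; your packaging is arguably tidier but the mathematical content is identical.
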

\begin{proof}
Let $z^{i_1}, \cdots, z^{i_\omega}$ be the invariant factors of $\sum_{j=0}^\delta z^j F_j$. According to the remark on invariant factors preceding Corollary 4, the conditions (5) and (6) are equivalent to:
\begin{description}
\item[(7)] $\sum_{j=0}^\delta z^jF_j$ is of the full rank $\omega$ and $\delta$ is not smaller than any among $i_1, \cdots, i_\omega$.
\end{description}

The statement (c) in Theorem 7 means that a matrix $\mathbf{A}_0$ over $\mathbb{D}$ can be derived from $F_0, \cdots, F_\delta$ such that
\[
z^\delta m_0 = \left(\sum_{j=0}^\delta z^j r_j \right)\cdot\mathbf{A}_0 \mod ~ z^{\delta+1}
\]
or equivalently,
\[
\tag{8}z^\delta\mathbf{I} = \left(\sum_{j=0}^\delta z^jF_j\right)\cdot\mathbf{A}_0 \mod~z^{\delta+1}
\]
It suffices to show the equivalence between this statement and (7). Denote by $\mathbf{U}$ and $\mathbf{V}$, respectively, the $\omega\times\omega$ invertible matrix and the $|In(v)|\times|In(v)|$ invertible matrix over $\mathbb{D}$ derived from $F_0, \cdots, F_\delta$ such that
\[
\sum_{j=0}^\delta z^j F_j = \mathbf{U}\cdot\left[\begin{matrix}z^{i_1} & & \mathbf{0} & \mathbf{0} \\
& \ddots & & \vdots \\ \mathbf{0} & & z^{i_\omega} & \mathbf{0} \end{matrix}\right]\cdot\mathbf{V}
\]
Thus, if (7) holds, then the matrix $\mathbf{A}_0 = \mathbf{V}^{-1} \cdot \left[\begin{matrix}
z^{\delta-i_1} & & \mathbf{0} \\ & \ddots & \\ \mathbf{0} & & z^{\delta-i_\omega} \\ \mathbf{0} & \cdots & \mathbf{0}
\end{matrix}\right]\cdot \mathbf{U}^{-1}$ will satisfy (8). On the contrary, if there is a matrix $\mathbf{A}_0$ over $\mathbb{D}$ subject to (8), then there exists a matrix $\mathbf{M}$ over $\mathbb{D}$ such that
\[
z^\delta\left(\mathbf{I} + z\mathbf{M}\right) = \left(\sum_{j=0}^\delta z^jF_j\right)\cdot\mathbf{A}_0
= \mathbf{U}\cdot\left[\begin{matrix}z^{i_1} & & \mathbf{0} & \mathbf{0} \\
& \ddots & & \vdots \\ \mathbf{0} & & z^{i_\omega} & \mathbf{0} \end{matrix}\right]\cdot\mathbf{V}\cdot\mathbf{A}_0
\]
Since both $\mathbf{U}$ and $\mathbf{I}+z\mathbf{M}$ are invertible over $\mathbb{D}$,
\[
z^\delta\mathbf{I} = \mathbf{U}\cdot\left[\begin{matrix}z^{i_1} & & \mathbf{0} & \mathbf{0} \\
& \ddots & & \vdots \\ \mathbf{0} & & z^{i_\omega} & \mathbf{0} \end{matrix}\right]\cdot\mathbf{V}\cdot \mathbf{A}_0 \cdot (\mathbf{I}+z\mathbf{M})^{-1}
= \left[\begin{matrix}z^{i_1} & & \mathbf{0} & \mathbf{0} \\
& \ddots & & \vdots \\ \mathbf{0} & & z^{i_\omega} & \mathbf{0} \end{matrix}\right]\cdot\mathbf{V}\cdot \mathbf{A}_0 \cdot (\mathbf{I}+z\mathbf{M})^{-1}\cdot\mathbf{U}^{-1}
\]
This implies that $\delta$ is not smaller than any among $i_1, \cdots, i_\omega$. \hfill $\blacksquare$
\end{proof}

\vspace{6 pt}
\begin{example_newdef}
Adopt the notation in Example 2. For the node $v$ with incoming channels $a$ and $b$, $\sum_{j=0}^\infty z^jF_j = \left[\begin{matrix} 2 & ~-z \\ 0 & ~-z \end{matrix}\right]$. In particular, $F_0 = \left[\begin{matrix} 2 & ~0 \\ 0 & ~0 \end{matrix}\right]$ and $F_1 = \left[\begin{matrix} 0 & ~2 \\ 0 & ~2 \end{matrix}\right]$. The matrix $F_0$ is not of full rank, and hence the code is not decodable at node $v$ with delay $0$. On the other hand, $F_0 + zF_1 = \left[\begin{matrix} 2 & ~2z \\ 0 & ~2z \end{matrix}\right]$ is of full rank. Since $\Delta_{1,2} = 4z$ and $\Delta_{1,1} = 2$, the valuation of $\Delta_{1, 2}/\Delta_{1, 1}$ is 1 and hence the $\mathbb{Z}_{(3)}$-LNC is decodable at node $v$ with delay $1$. From Algorithm 2, the matrix $F_0 + zF_1 = \left[\begin{matrix} 2 & ~2z \\ 0 & ~2z \end{matrix}\right]$ leads to a time-variant decoding matrix $\mathbf{A}_0 = \left[\begin{matrix} 2z & ~z \\ 0 & ~2+z \end{matrix}\right]$ over $\mathbb{D}$. In comparison, the time-invariant decoding matrix $\left[ \begin{matrix} z/2 & ~-z/2 \\ 0 & -1 \end{matrix} \right]$ at node $v$, as illustrated in Example 1, is computed based on $\sum_{j=0}^\infty z^jF_j$, and it has infinite nonzero terms in power series expression.

Assume that the source generates a sequence of messages $\mathbf{m} = (2~~1)+z(2~~1)+z^2(1~~2)+z^3(1~~1)$ over the symbol alphabet $R$. Then, for the node $v$, the power series of received symbol vectors over $R$ is
\[
\sum_{j=0}^\infty z^jr_j = (1~~0)+z(2~~2)+z^2(0~~2)+z^3(0~~1)+z^4(1~~2)+\cdots
\]
Based on the time-variant decoding matrix $\left[\begin{matrix} 2z & ~z \\ 0 & ~2+z \end{matrix}\right]$ over $\mathbb{D}$, $m_0$ can be decoded by formula (3) from $r_0$ and $r_1$:
\[
\sigma\left(\frac{(1~~0)\cdot\left[\begin{matrix} 2z & ~z \\ 0 & ~2+z \end{matrix}\right]}{z}\right) = (2 ~~ 1)
\]
Table I lists the dynamically updated parameters computed by the routine in Algorithm 2' for decoding $m_t$, $t \geq 1$. \hfill $\blacksquare$
\end{example_newdef}

\begin{table}
\caption{The dynamically updated parameters computed by the routine in Algorithm 2' for decoding $m_t$, $t \geq 1$ in Example 3.}
\center
\begin{tabular}{c|c|c|c}
  \hline
    & after $t = 1$ & after $t = 2$ & after $t=3$ \\ \hline
  $s_0$ & (1~~0) & (1~~0) & (0~~0) \\ \hline
  $r_t$ & (1~~0) & (2~~0) & (2~~0) \\ \hline
  $s_1$ & (0~~2) & (1~~2) & (0~~2) \\ \hline
  $r_{t+1}$ & (0~~0) & (2~~0) & (0~~1) \\ \hline
  $s_2$ & (0~~2) & (1~~5) & (0~~8) \\ \hline
  $m_t = \sigma((r_t + zr_{t+1})\cdot\mathbf{A}_0)$ & (2~~1) & (1~~2) & (1~~1) \\
  \hline
\end{tabular}
\end{table}

\section{Summary}\label{Sec:Conclusion}
The conventional theory of linear network coding (NC) deals with only acyclic networks. Convolutional network coding (CNC) applies to all networks. It is also a form of linear NC, but the linearity is w.r.t. the ring $\mathbb{F}[(D)]$ of rational power series instead of the field $\mathbb{F}$ of data symbols. The ring $\mathbb{F}[(D)]$ qualifies as a discrete valuation ring (DVR). CNC has previously been generalized to linear NC w.r.t. any DVR \cite{LiSun_11} for potential enhancement on applicability.

Some issues on DVR-based NC, including the special case of CNC, naturally arise: What is decoding delay at a receiving node? How to implement a decoding mechanism at a prescribed decoding delay? What is the optimal decoding delay and how to determine it?

Initially decoding of DVR-based NC at a node is defined in a \emph{time-invariant} manner in \cite{LiSun_11}, which also provides an algorithm to decode a causal DVR-based NC with a delay equal to the largest valuation among all invariant factors of the received submodule at the node, while the involved decoding matrix depends on the identity of the whole received submodule. Meanwhile, \emph{time-variant decoding} for CNC has been considered in \cite{Guo_Cai_Sun}. This notion though relies on the special characteristic of convolutional arithmetic over the particular DVR $\mathbb{F}[(D)]$.

The present paper attempts to strengthen the study on decoding a DVR-based NC, including the special case of CNC. First, the optimal delay in time-invariant decoding is shown to be exactly the aforementioned largest valuation. Then, by expressing elements in the DVR in the series form with the dummy variable being the uniformizer and the coefficients being coset representatives of the DVR over its unique maximal ideal, the notion of time-variant decoding is generalized from CNC to DVR-based NC. By showing time-invariant decoding as a special case of time-variant decoding, the meaning of time-invariant decoding delay formulated in \cite{LiSun_11} gets clarified. Although the optimal delay $\delta$ turns out to be the same for both time-invariant decoding and time-variant decoding, the latter is less constricted because both the check of the decodability with delay $\delta$ and the design of the involved decoding matrix depend only on the lowest $\delta+1$ terms in the power series expression of coding vectors. Such a decoding scheme is also proposed in this paper.




\end{document}